\newtheorem{proposition}{Proposition}
\DeclareMathOperator{\rel}{rel}
\DeclareMathOperator{\click}{c}
\DeclareMathOperator{\Click}{C}
\DeclareMathOperator{\rank}{\mathrm{rk}}
\DeclareMathOperator*{\argmax}{argmax}
\newcommand{\am}[1]{\textcolor{blue}{(AA: #1)}}
\newcommand{\xw}[1]{\textcolor{red}{(XW: #1)}}
\begin{document} 

\title{Estimating Position Bias without Intrusive Interventions}

\author{Aman Agarwal}
\affiliation{%
  \institution{Cornell University}
  \streetaddress{Gates Hall}
  \city{Ithaca}
  \state{NY}
  \postcode{14853}}
\email{aa2398@cornell.edu}

\author{Ivan Zaitsev}
\affiliation{%
  \institution{Cornell University}
  \streetaddress{Gates Hall}
  \city{Ithaca}
  \state{NY}
  \postcode{14853}}
\email{iz44@cornell.edu}

\author{Xuanhui Wang, Cheng Li, Marc Najork}
\affiliation{%
  \institution{Google Inc.}
  \streetaddress{Street}
  \city{Mountain View}
  \state{CA}
  \postcode{99999}}
\email{{xuanhui, chgli, najork}@google.com}

\author{Thorsten Joachims}
\affiliation{%
  \institution{Cornell University}
  \streetaddress{Gates Hall}
  \city{Ithaca}
  \state{NY}
  \postcode{14853}}
\email{tj@cs.cornell.edu}

\renewcommand{\shortauthors}{A. Agarwal et al.}

\renewcommand{\authors}{Aman Agarwal, Ivan Zaitsev, Xuanhui Wang, Cheng Li, Marc Najork and\,\,\, Thorsten Joachims}

\begin{abstract} 
Presentation bias is one of the key challenges when learning from implicit feedback in search engines, as it confounds the relevance signal. While it was recently shown how counterfactual learning-to-rank (LTR) approaches \cite{Joachims/etal/17a} can provably overcome presentation bias when observation propensities are known, it remains to show how to effectively estimate these propensities. In this paper, we propose the first method for producing consistent propensity estimates without manual relevance judgments, disruptive interventions, or restrictive relevance modeling assumptions. First, we show how to harvest a specific type of intervention data from historic feedback logs of multiple different ranking functions, and show that this data is sufficient for consistent propensity estimation in the position-based model. Second, we propose a new extremum estimator that makes effective use of this data. In an empirical evaluation, we find that the new estimator provides superior propensity estimates in two real-world systems -- Arxiv Full-text Search and Google Drive Search. Beyond these two points, we find that the method is robust to a wide range of settings in simulation studies.
\end{abstract}



\begin{CCSXML}
<ccs2012>
<concept>
<concept_id>10002951.10003317.10003338.10003343</concept_id>
<concept_desc>Information systems~Learning to rank</concept_desc>
<concept_significance>500</concept_significance>
</concept>
</ccs2012>
\end{CCSXML}


\keywords{Unbiased learning-to-rank, counterfactual inference, propensity estimation}

\maketitle

\section{Introduction}

In most information retrieval (IR) applications (e.g., personal search, scholarly search, product search), implicit user feedback (e.g. clicks, dwell time, purchases) is routinely logged and constitutes an abundant source of training data for learning-to-rank (LTR). However, implicit feedback suffers from presentation biases, which can make its naive use as training data highly misleading \cite{Joachims/etal/07a}. In particular, the position at which a result is displayed introduces a strong bias, since higher-ranked results are more likely to be discovered by the user than lower-ranked ones.

It was recently shown that counterfactual inference methods provide a provably unbiased and consistent approach to LTR despite biased data \cite{Joachims/etal/17a}. The key prerequisite for counterfactual LTR is knowledge of the propensity of obtaining a particular feedback signal, which enables unbiased Empirical Risk Minimization (ERM) via Inverse Propensity Score (IPS) weighting. This makes getting accurate propensity estimates a crucial prerequisite for effective unbiased LTR.

In this paper, we propose the first approach for producing consistent propensity estimates without manual relevance judgments, disruptive interventions, or restrictive relevance-modeling assumptions. We focus on propensity estimation under the Position-Based Propensity Model (PBM), 
where all existing propensity estimation methods have substantial drawbacks. In particular, the conventional estimator for the PBM takes a generative approach \cite{Chuklin/etal/15a} and requires individual queries to repeat many times. This is unrealistic for most ranking applications. To avoid this requirement of repeating queries, Wang et al. \cite{Wang/etal/18}  included a relevance model that is jointly estimated via an Expectation-Maximization (EM) procedure~\cite{Dempster+al:1977}. Unfortunately, defining an accurate relevance model is just as difficult as the learning-to-rank problem itself, and a misspecified relevance model can lead to biased propensity estimates. An alternative to generative click modeling on observational data are estimators that rely on specific randomized interventions -- for example, randomly swapping the result at rank $1$ to any rank $k$ \cite{Joachims/etal/17a}. While this provides provably consistent propensity estimates for the PBM, it degrades retrieval performance and user experience~\cite{Wang/etal/18} and is thus costly. In addition, we find in the following that swap-interventions are statistically rather inefficient. The approach we propose in this paper overcomes the disadvantages of the existing methods, as it does not require repeated queries, a relevance model, or additional interventions. 

The key idea behind our estimation technique is to exploit data from a natural intervention that is readily available in virtually any operational system -- namely that we have implicit feedback data from more than one ranking function. We call this Intervention Harvesting. Since click behavior depends jointly on examination and relevance, we show how to exploit this intervention to control for any difference in overall relevance of results at different positions under the PBM. This makes our approach fundamentally interventional and it is thus consistent analogous to using explicit swap interventions under mild assumptions. However, by leveraging existing data that is readily available in most systems, it does not require additional online interventions and the resulting decrease in user experience. To make efficient use of the interventional data we harvest, we propose a specific extremum estimator -- called AllPairs -- that combines all available data for improved statistical efficiency without the need for a relevance model that could introduce bias. We find that this estimator works well even if the rankers we harvest interventions from are quite similar in their result placements and overall performance, and that it is able to recover the relative propensities globally even if most of the changes in rank are small.



\section{Related Work}





Implicit user feedback (e.g. clicks and dwell time) has been widely used to improve search quality. In order to fully utilize the implicit feedback signals for LTR algorithms, various types of bias have to be handled~\cite{Joachims/etal/17a}, e.g., position bias~\cite{Joachims/etal/05a}, presentation bias~\cite{Yue2010a}, and trust bias~\cite{OBrien+Keane:2006, Joachims/etal/05a}. To address this challenge, a large amount of research has been devoted to extracting more accurate signals from click data. For example, some heuristic methods have been proposed to address the position bias by utilizing the pairwise preferences between clicked and skipped documents~\cite{Joachims/etal/05a,Joachims/02c,Joachims/etal/07a}. Though these methods have been found to provide more accurate relevance assessments, their data is still biased. For example, click vs. skip preference tend to reverse the presented order when used for learning~\cite{Joachims/02c} due to their sampling bias.

Recently, Joachims et al.~\cite{Joachims/etal/17a} presented a counterfactual inference framework, which provides a provably unbiased and consistent approach to LTR even with biased feedback data. It requires knowledge of the propensity of obtaining a particular feedback signal, based on which an Inverse Propensity Scoring (IPS) approach can be applied. IPS was developed in causal inference~\cite{Rosenbaum1983} and is a widely accepted technique for handling sampling bias. It has been employed in unbiased evaluation and learning~\cite{Agarwal/etal/17a,Miroslav+al:2011,li2014arxiv,li2011unbiased,Schnabel/etal/16b,Swaminathan/Joachims/15c}. The common assumption in most of these studies is that the propensities are under the system's control and are thus known. In the unbiased LTR setting, however, propensities arise from user behavior and thus need to be estimated.

The problem of propensity estimation for LTR was already addressed in other works. Wang et al.~\cite{Wang/etal/16} and Joachims et al.~\cite{Joachims/etal/17a} proposed to estimate propensity via randomization experiments. Carterette and Chandar~\cite{Carterette:2018} extend the counterfactual inference framework~\cite{Joachims/etal/17a} by considering the case of evaluating new rankers that can retrieve previously unseen documents. Their methods still rely on interventions, though they make the interventions minimally invasive. To avoid intrusive interventions that degrade the user search experience, Wang et al.~\cite{Wang/etal/18} proposed a regression-based EM algorithm to estimate position bias without interventions. Similarly, Ai et al.~\cite{Ai/etal/18a} presented a framework to jointly learn an unbiased ranker and an unbiased propensity model from biased click data. Both works couple relevance estimation with propensity estimation, which introduces the drawback of potential bias when the relevance model is misspecified. Our work differs from these by controlling for relevance explicitly and without a relevance model, while still avoiding intrusive interventions. 

Another line of research employs click models~\cite{Chuklin/etal/15a} to infer relevance judgments from click logs. Training is typically performed via generative maximum likelihood under specific modeling assumptions about user behavior. There are two classic clicks models: the position-based model (PBM)~\cite{Richardson2007} and the Cascade model~\cite{craswell2008position}. Based on these two models, more advanced models have been developed, including UBM~\cite{dupret2008browsing}, DBN~\cite{Chapelle/Zhang/09}, and CCM~\cite{guo2009ccm}. Our propensity model is based on the PBM. However, we do not use it as a generative model to infer relevance, but instead use interventional techniques to infer propensities even without repeat queries.


\section{Harvesting Interventions} \label{sec:harvesting}

We approach the propensity estimation task by harvesting implicit interventions from already logged data. In particular, we make use of the fact that we typically have data from multiple historic ranking functions, and we will identify the conditions under which these provide unconfounded intervention data. We focus on the Position-Based Propensity Model (PBM), which we review before defining interventional sets and analyzing their properties.

\subsection{Position-Based Propensity Model} \label{sec:pbm}

The position-based model recognizes that higher-ranked results are more likely to be considered (i.e. discovered and viewed) by the user than results further down the ranking. Suppose that for a particular query $q$, result $d$ is displayed at position $k$. Let $\Click$ be the random variable corresponding to a user clicking on $d$, and let $E$ be the random variable denoting whether the user examines $d$. In our notation, $q$ represents all information about the users, the query context, and which documents the user considers relevant or not. This mean we can denote the relevance of an individual document as a non-random function $\rel(q,d)$ of $q$, where $\rel(q,d)=1$ and $\rel(q,d)=0$ indicates relevant and non-relevant respectively. Then according to the Position-Based Propensity Model (PBM) \cite{Chuklin/etal/15a},
\begin{align*}
    \Pr(\Click=1|q,d,k) & = \Pr(E=1|k) \rel(q,d) \\
                 & = p_k \rel(q,d).
\end{align*}
In this model, the examination probability $p_k := \Pr(E=1|k)$ depends only on the position, and it is identical to the observation propensity \cite{Joachims/etal/17a}. For learning, it is sufficient to estimate relative propensities $\nicefrac{p_k}{p_1}$ for each $k$ \cite{Joachims/etal/17a}, since multiplicative scaling does not change the training objective of counterfactual learning methods (e.g. \cite{Joachims/etal/18a,Joachims/etal/17a,Agarwal/etal/18b,Ai/etal/18a,Swaminathan/Joachims/15c,Swaminathan/Joachims/15d}). Estimating these relative propensities is the goal of this paper. 

Note that one can train multiple PBM models to account for changes in the propensity curve due to context. In this way, the expressiveness of the PBM model can be substantially extended. For example, one can train separate PBM models for navigational vs. informational queries simply by partitioning the data. While training multiple such models is prohibitively expensive when intrusive interventions are required, the intervention harvesting approach we describe below makes training such contextual PBMs feasible since it does not require costly data, is both statistically and computationally efficient, and does not have any parameters that require manual tuning. 


\subsection{Controlling for Relevance through Swap Interventions} \label{sec:swapexperiment}

The key problem in both propensity estimation and unbiased learning to rank is that we only observe clicks $\Click$, but we never get to observe $E$ (whether a user examined a result) and $\rel(q,d)$ (whether the user found $d$ relevant for $q$) individually. This makes it difficult to attribute the lack of a click to a lack of examination or a lack of relevance. A key idea for overcoming this dilemma without explicit relevance judgments or cumbersome instrumentation (e.g. eye tracking) was proposed in \cite{Joachims/etal/17a}, namely to control for relevance through randomized interventions.

The intervention proposed in \cite{Joachims/etal/17a} is to randomly swap the result in position $1$ with the result in position $k$. We call these $\mathit{Swap}(1,k)$-interventions. Such swaps provide a completely randomized experiment \cite{Imbens/Rubin/15}, meaning that the assignment of the document to a position does not depend on relevance or any covariates (e.g. abstract length, document language). Under these swap interventions, we now get to observe how many clicks position $1$ results get when they stay in position $1$ vs. when they get swapped into position $k$. Since the expected relevance in either condition (i.e. swap vs. not swapped) is the same, any change in clickthrough rate must be proportional to a drop in examination. 

More formally, we model user queries as sampled i.i.d. $q \sim \Pr(Q)$. Whenever a query is sampled, the ranker $f(q)$ sorts the candidate results $d$ for the query and we apply the randomized swap intervention between positions $1$ and some fixed $k$ before the ranking is displayed to the user. As a precursor to the later exposition, suppose that the random swap occurs with a fixed probability $p$ (not necessarily $0.5$), yielding the logged datasets $D^{1,k}_1=(q_1^i,d_1^i,C_1^i)^{n_1}$ (result stayed in position $1$) and $D^{1,k}_k=(q_k^j,d_k^j,C_k^j)^{n_2}$ (result was swapped into position $k$) of sizes $n_1$ and $n_2$ respectively. Here $C_1^i$ denotes whether the document $d_1^i$ placed at position $1$  was clicked or not, and similarly for $C_k^j$. Denote with $\hat{c}_1^{1,k}=\frac{1}{n_1}\sum C_1^i$ the rate of clicks that documents get when they remain in position $1$ and let $\hat{c}_k^{1,k}=\frac{1}{n_2}\sum C_k^j$ be the rate of clicks when they get swapped to position $k$. Then, under the PBM, the relative propensity is equal to the ratio of expected click rates:
\begin{align*} 
    \frac{p_k}{p_1} & = \frac{p_k E_{D^{1,k}_k}\left[\sfrac{\sum \rel(q_k^j,d_k^j)}{n_2}\right]}{p_1 E_{D^{1,k}_1}\left[\sfrac{\sum \rel(q_1^i,d_1^i)}{n_1}\right]} \\
    & = \frac{E_{D^{1,k}_k}\left[\sfrac{\sum p_k \rel(q_k^j,d_k^j)}{n_2}\right]}{E_{D^{1,k}_1}\left[\sfrac{\sum p_1 \rel(q_1^i,d_1^i)}{n_1}\right]} \\
    & = \frac{E_{D^{1,k}_k}\left[\sfrac{\sum \Pr(C_k^j=1|q_k^j,d_k^j,k)}{n_2}\right]}{E_{D^{1,k}_1}\left[\sfrac{\sum\Pr(C_1^i=1|q_1^i,d_1^i,1)}{n_1}\right]} \\
    & = \frac{E_{D^{1,k}_k}\left[\hat{c}_k^{1,k}\right]}{E_{D^{1,k}_1}\left[\hat{c}_1^{1,k}\right]}
\end{align*}
The first equality holds since swaps are completely at random, such that the expected average relevance under each condition is equal and their ratio is one (we need not consider $n_1$ and $n_2$ as random variables). It is thus reasonable to estimate the relative propensity as 
\begin{align*}
    \frac{\hat{p}_k}{\hat{p}_1} = \frac{\hat{c}_k^{1,k}}{\hat{c}_1^{1,k}},
\end{align*}
which is a statistically consistent estimate as the sample size $n$ grows \cite{Joachims/etal/17a}. Analogously, relative propensities can be estimated between any pairs of positions $k$ and $k'$ with $\mathit{Swap}(k,k')$ interventions \cite{Wang/etal/18}.


\subsection{Interventional Sets from Multiple Rankers} \label{sec:interventionalsets}

A key shortcoming of the swap experiment is its impact on user experience, since retrieval performance can be degraded quite substantially. This is especially true for swaps between position $1$ and $k$ for large $k$, even though this particular swap experiment makes it easy to estimate propensities relative to position $1$ directly. To overcome this impact on user experience, we now show how to harvest interventions from already existing data under mild assumptions, so that no additional swap experiments are needed. As part of this, we may never see a direct swap between position $1$ and $k$, and we tackle the problem of how to aggregate many local swaps into an overall consistent estimate in Section~\ref{sec:estimator}. 

Our key idea in harvesting swap interventions lies in the use of data from multiple historic rankers. Logs from multiple rankers are typically available in operational systems, since multiple rankers may be fielded at the same time in A/B tests, or when the production ranker is updated frequently. Consider the case where we have data from $m$ historic rankers $F=\{f_1,...,f_m\}$. Furthermore, a crucial condition is that the query distribution must not depend on the choice of ranker $f_i$ (where the query $q$ includes the user's relevance vector $\rel$ in our notation), 
\begin{align}
    \forall f_i: \Pr(Q|f_i) = \Pr(Q) \Rightarrow  \forall q \in Q: \Pr(f_i|q)=\Pr(f_i) \label{eq:independence}
\end{align}
Note that the condition on the left implies that $\Pr(f_i|q)=\Pr(f_i)$ on the right by Bayes rule, which is related to exploration scavenging \cite{Scavenging2008}. Intuitively, this condition avoids that different rankers $f_i$ get different types of queries. For rankers that are compared in an A/B test, this condition is typically fulfilled by construction since the assignment of queries to rankers is randomized. For data from a sequence of production rankers, one needs to be mindful that any temporal covariate shift in the query and user distribution is acceptably small. 

We denote the click log of each ranker $f_i$ with $\mathcal{D}_i =(q_i^j, y_i^j, \click_i^j)^{n_i}$, where $n_i$ is the number of queries that $f_i$ processed.  Here $j\in [n_i]$, $q_i^j$ is a query, $y_i^j=f_i(q_i^j)$ is the presented ranking, and $\click_i^j$ is a vector that indicates for each document click or no click.  Since most retrieval systems only rerank a candidate set of documents in their final stage, we denote $\Omega(q)$ as the candidate set of results for query $q$. We furthermore denote the rank of candidate result $d$ in ranking $y_i^j$ as $\rank(d|y_i^j)$, and we use $\click_i^j(d) \in [0,1]$ to denote whether result $d$ was clicked or not. 

We begin by defining {\em interventional sets} $S_{k,k'}$ of query-document pairs, where one ranking function $f \in F$ puts document $d$ at position $k$ and another ranker $f' \in F$ puts the same document $d$ at position $k'$ for the same query $q$. Let $M$ be some fixed number of top positions for which propensity estimates are desired (e.g. $M=10)$. Then, for each two ranks $k \neq k' \in [M]$, we define the interventional set as 
\begin{align*}
    S_{k,k'} := \{(q,d) : \:& q \in Q, d \in \Omega(q),\;\\ & \exists f,\!f' \,  \rank(d|f\!(q)) \!=\! k \wedge \rank(d|f'\!(q)) \!=\! k'\}
\end{align*}
Intuitively, the pairs in these sets are informative because they receive different treatments or interventions based on the choice of different rankers. But note that we are {\em not} requiring any query to occur multiple times. An interventional set merely reflects that two potential outcomes (i.e. document $d$ either in position $k$ or in position $k'$ for query $q$) were possible. In fact, we only ever observe one factual outcome, while the other outcome remains counterfactual and unobserved. 

The key insight is that for any pair of ranks $(k,k')$, the interventional set contains the query-document pairs $(q,d)$ for which the rank of $d$ was randomly assigned to either $k$ or $k'$ via the choice of ranking function $f_i$. Specifically, there are three possible outcomes depending on the choice of ranking function $f_i \in F$: (a) $f_i$ puts $d$ at rank $k$, (b) $f_i$ puts $d$ at rank $k'$, or (c) $f_i$ puts $d$ at some other rank. In the latter case, the instance is not included in the interventional set $S_{k,k'}$, but the former two cases can be seen as the two conditions of a swap experiment between positions $k$ and $k'$. In this way, we can think about these as virtual swap interventions between ranks $k$ and $k'$ where the randomization comes from the randomized choice of ranking function according to \eqref{eq:independence}.

While this swap experiment is completely randomized under condition $\eqref{eq:independence}$ (i.e. the choice of $f_i$ does not depend on $q$), the assignment is generally not uniform (i.e. $d$ could have a higher probability to be presented in position $k$ than in position $k'$). The weights 
\begin{align*}
    w(q,d,k) := \sum_{i=1}^m n_i\mathds{1}[\rank(d|f_{i}(q)) = k].
\end{align*}
are then used to account for this non-uniformity. Specifically, the weight $w(q,d,k)$ reflects how often document $d$ is ranked at position $k$ given that we have employed each ranking function $f_i$ exactly $n_i$ times. Therefore, the probability of assigning $d$ to $k$ as opposed to $k'$ for query $q$ can be estimated as
\begin{align*}
    P(rank=k|rank=k \mbox{ or } rank=k',q,d) = \frac{w(q,d,k)}{w(q,d,k)+w(q,d,k')}
\end{align*}
We will show in the following how interventional sets can be used to control for unobserved relevance information when estimating propensities.

\subsection{Controlling for Unobserved Relevance through Interventional Sets}

As we had already seen in Section~\ref{sec:swapexperiment}, we need to disentangle two unobserved quantities -- relevance and examination -- when analyzing observed clicks in the PBM. This can be achieved by controlling for relevance through randomization, which was done explicitly in Section~\ref{sec:swapexperiment} via $\mathit{Swap}(1,k)$-interventions. The following shows that interventional sets $S_{k,k'}$ provide analogous control for any pair of ranks $(k,k')$ under condition \eqref{eq:independence}. 

For each interventional set $S_{k,k'}$, with $k \neq k' \in [M]$, we can now define the quantities $\hat{c}_k^{k,k'}$ (and $\hat{c}_{k'}^{k,k'}$) which can be thought of as the rate of clicks in position $k$ (and in position $k'$):
\begin{align*}
    \hat{c}_k^{k,k'} \!\!&:=\!\! \sum_{i=1}^m \!\sum_{j=1}^{n_i} \!\sum_{d \in \Omega(q_i^j)} \!\!\mathds{1}_{[(q_i^j,d) \in S_{k,k'}]}\mathds{1}_{[\rank(d|y_i^j)=k]}\frac{\click_i^j(d)}{w(q_i^j,d,k)}.\\
    \hat{c}_{k'}^{k,k'} \!\!&:=\!\! \sum_{i=1}^m \!\sum_{j=1}^{n_i} \!\sum_{d \in \Omega(q_i^j)} \!\!\mathds{1}_{[(q_i^j,d) \in S_{k,k'}]}\mathds{1}_{[\rank(d|y_i^j)=k']}\frac{\click_i^j(d)}{w(q_i^j,d,k')}.\\
\end{align*}
The definition normalizes the observed clicks by dividing with $w(q_i^j,d,k)$, which accounts for non-uniform assignment probabilities. Note that $w(q_i^j,d,k)$ is non-zero whenever the first indicator is true, such that we never divide a non-zero quantity by zero (and we define $0/0:=0$). Intuitively, $\hat{c}_k^{k,k'}$ and $\hat{c}_{k'}^{k,k'}$ capture the weighted click-through rate at position $k$ and $k'$ restricted to ($k$, $k'$)-interventional (query, document) pairs, where the weights $w(q,d,k)$ account for the imbalance in applying the intervention of putting document $d$ at position $k$ vs $k'$ for query $q$.


The following shows that $\hat{c}_k^{k,k'}$ and $\hat{c}_{k'}^{k,k'}$ are proportional to the true clickthrough rate at positions $k$ and $k'$ in expectation, conditioned on the number of relevant documents in the interventional set $S_{k,k'}$.

\begin{proposition} \label{prop:expectationc}
For the PBM model, i.i.d. queries $q \sim \Pr(Q)$, and under the condition in \eqref{eq:independence}, the expectations of $\hat{c}_k^{k,k'}$ and $\hat{c}_{k'}^{k,k'}$ are
\begin{align*}
    \mathbb{E}_{q,\click}[\hat{c}_k^{k,k'}] & = p_k  r_{k,k'}\\
    \mathbb{E}_{q,\click}[\hat{c}_{k'}^{k,k'}] & = p_{k'} r_{k,k'}
\end{align*}
where $r_{k,k'}=\mathbb{E}_q\left[\sum_{d \in \Omega(q)} \mathds{1}_{[(q,d) \in S_{k,k'}]} \rel(q,d)\right]$. 
\end{proposition}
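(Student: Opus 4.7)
The plan is to compute the expectation directly using linearity, the PBM factorization of click probabilities, and the i.i.d. query assumption, then exploit the fact that the weights $w(q,d,k)$ in the denominator are designed to exactly cancel a sum over rankers that arises naturally. I will prove the statement for $\hat{c}_k^{k,k'}$; the case of $\hat{c}_{k'}^{k,k'}$ is symmetric.

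First I would pass the expectation inside the triple sum by linearity. For each fixed $(i,j)$, I condition on the query $q_i^j$. The ranking $y_i^j = f_i(q_i^j)$ is then deterministic, and the indicator $\mathds{1}_{[(q_i^j,d)\in S_{k,k'}]}$, the weight $w(q_i^j,d,k)$, and the rank indicator $\mathds{1}_{[\rank(d|y_i^j)=k]}$ are all deterministic given $q_i^j$. The only remaining randomness is the click $\click_i^j(d)$, and by the PBM, $\mathbb{E}[\click_i^j(d)\mid q_i^j, \rank(d|y_i^j)=k] = p_k\,\rel(q_i^j,d)$. So I can factor out $p_k$.

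Next I use condition \eqref{eq:independence}, which implies $q_i^j \sim \Pr(Q)$ i.i.d.\ across both $i$ and $j$. Hence the expectation over $q_i^j$ does not depend on $j$, and summing over $j\in[n_i]$ contributes a factor of $n_i$. After this step the expression becomes
\begin{align*}
\mathbb{E}[\hat{c}_k^{k,k'}] = p_k\,\mathbb{E}_q\!\left[\sum_{d\in\Omega(q)} \mathds{1}_{[(q,d)\in S_{k,k'}]}\,\frac{\rel(q,d)}{w(q,d,k)}\sum_{i=1}^m n_i\,\mathds{1}_{[\rank(d|f_i(q))=k]}\right],
\end{align*}
where I have pulled the sum over $i$ inside the $q$-expectation and grouped the ranker-dependent terms. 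The critical observation is that the inner sum over $i$ is, by the very definition of the weights, exactly $w(q,d,k)$, which cancels the denominator.

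The last step is to check that this cancellation is well-defined. Whenever $(q,d)\in S_{k,k'}$, the definition of the interventional set guarantees the existence of some ranker placing $d$ at position $k$, so $w(q,d,k)\geq 1>0$, and the convention $0/0 := 0$ is never needed in the nonzero terms. After the cancellation, the remaining expression is precisely $p_k\,\mathbb{E}_q\!\left[\sum_{d\in\Omega(q)}\mathds{1}_{[(q,d)\in S_{k,k'}]}\rel(q,d)\right] = p_k\,r_{k,k'}$. The main obstacle is bookkeeping: keeping track of which quantities depend on the ranker $f_i$, which depend only on $q$ and $d$, and ensuring condition \eqref{eq:independence} is invoked precisely where it licenses the i.i.d.\ treatment of queries across logs — the algebra itself is then immediate.
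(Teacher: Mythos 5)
Your proposal is correct and follows essentially the same route as the paper's proof: expand by linearity, apply the PBM to get $\mathbb{E}[\click_i^j(d)] = p_k \rel(q_i^j,d)$, invoke condition \eqref{eq:independence} to treat the queries as i.i.d.\ draws from $\Pr(Q)$ so the sum over $j$ yields a factor $n_i$, and observe that $\sum_i n_i \mathds{1}_{[\rank(d|f_i(q))=k]}$ is by definition $w(q,d,k)$, cancelling the denominator. Your added remark that $w(q,d,k)\geq 1$ whenever the interventional-set indicator fires is a nice explicit check that the paper only mentions in passing when defining $\hat{c}_k^{k,k'}$.
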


\begin{proof}
We only detail the proof for $\hat{c}_k^{k,k'}$, since the proof for $\hat{c}_{k'}^{k,k'}$ is analogous.
\begin{align*}
    &\mathbb{E}_{q,\click}[\hat{c}_k^{k,k'}]  \\
    &=\sum_{i=1}^m \sum_{j=1}^{n_i} \sum_{q \in \mathrm{Q}} \Pr(q) \sum_{d \in \Omega(q)}  \mathds{1}_{[(q,d) \in S_{k,k'}]} \mathds{1}_{[\rank(d|f_i(q))=k]}  \frac{\mathbb{E}_{\click}[\click(d)]}{w(q,d,k)} \\
    &=\sum_{i=1}^m \sum_{j=1}^{n_i} \sum_{q \in \mathrm{Q}} \Pr(q) \sum_{d \in \Omega(q)}  \mathds{1}_{[(q,d) \in S_{k,k'}]} \mathds{1}_{[\rank(d|f_i(q))=k]}  \frac{p_k \rel(q,d)}{w(q,d,k)} \\
    &=p_k \!\sum_{q \in \mathrm{Q}} \!\Pr(q) \!\!\!\sum_{d \in \Omega(q)} \!\!\!\!\mathds{1}_{[(q,d) \in S_{k,k'}]}\rel(q,d) \frac{\sum_{i=1}^m \sum_{j=1}^{n_i}\mathds{1}_{[\rank(d|f_i(q))=k]}}{w(q,d,k)} \\
    &=p_k \mathbb{E}_q [\sum_{d \in \Omega(q)} \mathds{1}_{[(q,d) \in S_{k,k'}]}\rel(q,d) \frac{\sum_{i=1}^m n_i\mathds{1}_{[\mathrm{rk}(d|f_i(q))=k]}}{w(q,d,k)}] \\
    &=p_k \mathbb{E}_q [\sum_{d \in \Omega(q)} \mathds{1}_{[(q,d) \in S_{k,k'}]}\rel(q,d)]
\end{align*}
The first equality follows from the i.i.d. assumption of condition \eqref{eq:independence} and that $y_i^j=f_i(q_i^j)$ by definition, the second from the PBM definition, and the third by taking the inner terms common. 
\end{proof}
The quantity $r_{k,k'}$ is related to the average relevance of the documents in the interventional set $S_{k,k'}$ (also see Section~\ref{sec:allpairs}). While $r_{k,k'}$ is unobserved, it is shared between both $\hat{c}_k^{k,k'}$ and $\hat{c}_{k'}^{k,k'}$. We can thus get the relative propensity between positions $k$ and $k'$ as
\begin{align*}
    \frac{p_k}{p_k'} = \frac{p_k r_{k,k'}}{p_{k'} r_{k,k'}} = \frac{\mathbb{E}_{q,\click}[\hat{c}_k^{k,k'}]}{\mathbb{E}_{q,\click}[\hat{c}_{k'}^{k,k'}]}.
\end{align*}
We are now in a position to define specific estimators for the relative propensities based on the interventional sets.

\section{Propensity Estimators for Interventional Sets} \label{sec:estimator}

This section defines relative propensity estimators that use interventional sets. The first set of estimators, which we call local estimators, are straightforward adaptations of estimators that had been proposed for data from explicit interventions \cite{Joachims/etal/17a,Wang/etal/18}. However, these local estimators ignore much of the available information when harvesting interventions, and we thus develop a new global estimator that exploits information from all interventional sets $S_{k,k'}$.

\subsection{Local Estimators}

Since the click counts $\hat{c}_k^{k,k'}$ and $\hat{c}_{k'}^{k,k'}$ can be treated just like data from an explicit intervention, the same estimators apply. In particular, we observe the interventional sets $S_{1,k}$ and can thus use the same estimator that we previously used for the explicit $\mathit{Swap}(1,k)$ experiment \cite{Joachims/etal/17a}. We call this the {\bf PivotOne} estimator
\begin{align*}
    \frac{\hat{p}_k}{\hat{p}_1} = \frac{\hat{c}_k^{1,k}}{\hat{c}_{1}^{1,k}}
\end{align*}

We can similarly adapt the estimator used in \cite{Wang/etal/18}, which uses a chain of swaps between adjacent positions in the ranking. We call this the {\bf AdjacentChain} estimator
\begin{align*}
    \frac{\hat{p}_k}{\hat{p}_1} = \frac{\hat{c}_2^{1,2}}{\hat{c}_{1}^{1,2}} \cdot \frac{\hat{c}_3^{2,3}}{\hat{c}_{2}^{2,3}} \cdot ... \cdot \frac{\hat{c}_k^{k-1,k}}{\hat{c}_{k-1}^{k-1,k}}
\end{align*}

It is easy to see that both estimators are statistically consistent under mild conditions, most importantly that each relevant interventional set has non-zero support such that $\hat{c}_k^{k,k'}$ and $\hat{c}_{k'}^{k,k'}$ concentrate to their expectations.

\subsection{Global AllPairs Estimator} \label{sec:allpairs}

A key shortcoming of the local estimators is that they only use a small part of the available information, and that they ignore the data from most interventional sets $S_{k,k'}$. To overcome this shortcoming,
we developed a new extremum estimator called {\bf AllPairs} that resembles a maximum likelihood objective over all interventional sets. In order to formulate the training objective of the AllPairs estimator, we first need to define a quantity that is analogous to $\hat{c}_k^{k,k'}$ and $\hat{c}_{k'}^{k,k'}$, but counting the non-click events:
\begin{align*}
    \hat{\neg c}_k^{k,k'} &:= \sum_{i=1}^m \sum_{j=1}^{n_i} \sum_{d \in \Omega(q_i^j)} \mathds{1}_{[(q_i^j,d) \in S_{k,k'}]}\mathds{1}_{[\rank(d|y_i^j)=k]}\frac{1-\click_i^j(d)}{w(q_i^j,d,k)} \\
    \hat{\neg c}_{k'}^{k,k'} &:= \sum_{i=1}^m \sum_{j=1}^{n_i} \sum_{d \in \Omega(q_i^j)} \mathds{1}_{[(q_i^j,d) \in S_{k,k'}]}\mathds{1}_{[\rank(d|y_i^j)=k']}\frac{1-\click_i^j(d)}{w(q_i^j,d,k')}.
\end{align*}
Analogous to $\hat{c}_k^{k,k'}$ and $\hat{c}_{k'}^{k,k'}$, both quantities have the desired expectation.

\begin{proposition} \label{prop:expectationnegc}
For the PBM model, i.i.d. queries $q \sim \Pr(Q)$, and under the condition in \eqref{eq:independence}, the expectations of $\hat{\neg c}_k^{k,k'}$ and $\hat{\neg c}_{k'}^{k,k'}$ are
\begin{align*}
    \mathbb{E}_{q,\click}[\hat{\neg c}_k^{k,k'}] & =N_{k,k'} - p_k r_{k,k'}\\
    \mathbb{E}_{q,\click}[\hat{\neg c}_{k'}^{k,k'}] & = N_{k,k'} - p_k r_{k,k'}
\end{align*}
where $r_{k,k'}=\mathbb{E}_q\left[\sum_{d \in \Omega(q)} \mathds{1}_{[(q,d) \in S_{k,k'}]} \rel(q,d)\right]$ and $N_{k,k'}=\mathbb{E}_q [\sum_{d \in \Omega(q)} \mathds{1}_{[(q,d) \in S_{k,k'}]}]$.
\end{proposition}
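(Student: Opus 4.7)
The plan is to mirror the proof of Proposition~\ref{prop:expectationc}, exploiting the decomposition $1 - \click_i^j(d) = 1 - \click_i^j(d)$ together with linearity of expectation. First I would split $\mathbb{E}_{q,\click}[\hat{\neg c}_k^{k,k'}]$ into a counting term, obtained by replacing the factor $(1-\click_i^j(d))$ by $1$, and a click term, obtained by replacing it by $\click_i^j(d)$. The click term is exactly $\mathbb{E}_{q,\click}[\hat{c}_k^{k,k'}]$, which by Proposition~\ref{prop:expectationc} evaluates to $p_k r_{k,k'}$, leaving only the counting term to be analyzed.

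For the counting term I would follow the same chain of manipulations as in the proof of Proposition~\ref{prop:expectationc}: push the expectation inside using the i.i.d.\ assumption from \eqref{eq:independence}, interchange the outer sum over $(i,j)$ with the inner sum over $d$, and collect the ranker indicators. The resulting inner ratio is
\begin{align*}
\frac{\sum_{i=1}^m n_i\, \mathds{1}_{[\rank(d|f_i(q))=k]}}{w(q,d,k)},
\end{align*}
which by the very definition of $w(q,d,k)$ equals $1$ whenever the numerator is nonzero. Crucially, membership in $S_{k,k'}$ guarantees that some ranker places $d$ at rank $k$, so $w(q,d,k) > 0$ and the ratio collapses to $1$ on the support of the indicator $\mathds{1}_{[(q,d) \in S_{k,k'}]}$. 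What remains is $\mathbb{E}_q\bigl[\sum_{d \in \Omega(q)} \mathds{1}_{[(q,d) \in S_{k,k'}]}\bigr]$, which is $N_{k,k'}$ by definition. Subtracting yields the claim. The argument for $\hat{\neg c}_{k'}^{k,k'}$ is structurally identical after swapping the roles of $k$ and $k'$ everywhere, including in the weight denominator $w(q,d,k')$.

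I do not anticipate any serious obstacles, since the proof is essentially a reuse of Proposition~\ref{prop:expectationc}'s machinery combined with the observation that $w$ was constructed precisely so that the normalization ratio equals $1$ on the interventional event. The only subtle point worth verifying carefully is that $(q,d) \in S_{k,k'}$ implies $w(q,d,k) \neq 0$, so that the $0/0 := 0$ convention is never silently discarding a term that ought to contribute; this is immediate from the existential clause in the definition of $S_{k,k'}$, and is the same hypothesis that underwrites the analogous step in the previous proposition.
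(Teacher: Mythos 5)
Your proof is correct and is precisely the ``analogous'' argument the paper omits: the decomposition of $1-\click_i^j(d)$ into a counting term and a click term reduces everything to Proposition~\ref{prop:expectationc} plus the observation that the normalization ratio $\sum_i n_i \mathds{1}_{[\rank(d|f_i(q))=k]}/w(q,d,k)$ collapses to $1$ on the support of $\mathds{1}_{[(q,d)\in S_{k,k'}]}$, leaving exactly $N_{k,k'}$. Note only that your (correct) derivation yields $N_{k,k'} - p_{k'}\, r_{k,k'}$ for the second expectation, whereas the proposition as printed repeats $p_k$ in both lines --- an apparent typo in the statement, since the paper's subsequent normalization step uses $p_{k'}\bar{r}_{k,k'}$ for position $k'$.
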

\begin{proof}
The proof is analogous to Proposition~\ref{prop:expectationc} and thus omitted.
\end{proof}

The way we constructed the interventional sets, the expected relevances $r_{k,k'}$ are not necessarily normalized to be within the interval $[0,1]$. It is therefore more convenient to instead consider the normalized version of $r_{k,k'}$
\begin{align*}
    \bar{r}_{k,k'} \equiv \frac{r_{k,k'}}{N_{k,k'}} \in [0,1],
\end{align*}
using the definitions from Propositions~\ref{prop:expectationc} and ~\ref{prop:expectationnegc}. Under this normalization, we have that $\mathbb{E} [\hat{c}_k^{k,k'}] =  p_k \bar{r}_{k,k'}N_{k,k'}$ and $\mathbb{E}[\hat{\neg c}_k^{k,k'}] =  (1 - p_k \bar{r}_{k,k'})N_{k,k'}$ and similarly for $\hat{c}_k'^{k,k'}$ and $\hat{\neg c}_k^{k,k'}$. We model this normalized $\bar{r}_{k,k'}$ in the AllPairs estimator. 

We can now formulate the training objective of the AllPairs estimator. The following objective needs to be maximized with respect to $\hat{p}_k \in [0,1]$ and $\hat{r}_{k,k'}=\hat{r}_{k',k} \in [0,1]$ (since $\bar{r}_{k,k'}=\bar{r}_{k',k}$ by definition)
\begin{align*}
(\hat{p},\hat{r}) = \argmax_{p,r}\!\!\!\!\sum_{k\neq k' \in [M]} \!\!\!\!\!\!\hat{c}_k^{k,k'} \log(\hat{p}_k \hat{r}_{k,k'}) + \hat{\neg c}_k^{k,k'} \log(1-\hat{p}_k \hat{r}_{k,k'}).
\end{align*}

This optimization problem can be interpreted as Weighted Cross-Entropy Maximization for estimating the distribution $p_k\bar{r}_{k,k'}$ using weighted samples $\hat{c}_k^{k,k'}$ and $\hat{\neg c}_k^{k,k'}$. The weighting by $N_{k,k'}$ ensures that the contribution of each aggregated click-through sample is proportional to the size of its interventional set. 

From the solution of this optimization problem, $(\hat{p},\hat{r})$, we only need $\hat{p}$ while the matrix $\hat{r}$ can be discarded. To get normalized propensities relative to rank 1, we compute $\nicefrac{\hat{p}_k}{\hat{p}_1}$. Note that the optimization problem is quite small, as it uses only $O(M^2)$ variables and $2*M*(M-1)$ terms in the objective.

Unlike the local estimators, the AllPairs approach integrates all data from every interventional set. We will evaluate empirically how much statistical efficiency is gained by taking this global approach compared to the local estimators. 

\section{Empirical Evaluation}

We take a two-pronged approach to evaluating our intervention-harvesting technique and the estimators. First, we fielded them on two real-world systems -- the Arxiv Full-Text Search Engine and the Google Drive Search -- to gain insight into their practical effectiveness. Second, we augment these real-world experiments with simulation studies using synthetically generated click data, where we can explore the behavior of our method over the whole spectrum of settings (e.g. varying ranker similarity, presentation-bias severity, and click noise). 

\subsection{Real-World Evaluation: Arxiv Full-Text Search}

We conducted a controlled experiment on the Arxiv Full-Text Search Engine\footnote{\url{http://search.arxiv.org:8081/}}, where we compare the results from the AllPairs estimator using Intervention Harvesting with the results from a gold-standard intervention experiment as described below. We used three ranking functions $\{f_1, f_2, f_3\}$ for defining the interventional sets, which were generated by using different learning methods and datasets. 

\paragraph{Gold Standard.} Since we do not know the true propensities for the Arxiv Full-Text Search Engine, we use the $\mathit{Swap}(1,k)$ interventions and estimator described in Section~\ref{sec:swapexperiment} as our gold standard. With probability $0.5$, an incoming query is assigned to generating data towards a swap experiment. For each assigned query, a ranking function $f_i$ is chosen uniformly at random, and its rank 1 is swapped uniformly with rank $k \in \{1,..,21\}$ before it is presented to the user. The $\hat{c}^{1,k}_1$ and $\hat{c}^{1,k}_k$ are computed over all three $f_i$.

\paragraph{Intervention Harvesting.} For the other half of the incoming\,\,\, queries, we uniformly pick a ranking function $f_i$ and present its results without further intervention. From the data in these conditions, we then compute the interventional sets $S_{k,k'}$ for $k \not= k' \in \{1,..,21\}$ and their resulting $\hat{c}_k^{k,k'}$, $\hat{c}_{k'}^{k,k'}$, $\neg \hat{c}_k^{k,k'}$ and $\neg \hat{c}_{k'}^{k,k'}$. These are then used in the AllPairs estimator from Section~\ref{sec:allpairs}.


\subsubsection{Results.} Data for all six conditions was collected simultaneously between May 14, 2018 and August 1, 2018 to avoid confounding due to shift in the query distribution for maximum validity of the experiment. About 53,000 queries and 25,600 clicks were collected, about half for the Swap Experiment and half for the Interventional Set method. 

The estimated propensity curves for the gold-standard swap experiment and for the AllPairs estimator are shown in Figure~\ref{fig:arxiv}. The shaded region for each curve corresponds to a 95\% confidence interval calculated from 1000 bootstrap samples. As we can see, the curves follow a similar trend for each position and AllPairs mostly lies within the confidence interval of the gold-standard curve. However, the confidence interval for the AllPairs method is substantially tighter than for the swap experiment, indicating that AllPairs is not only less intrusive (no swap interventions) but also statistically more efficient given the same number of queries. 

The following provide further insight into this gain in efficiency. While we conducted interleaving experiments \cite{Chapelle/etal/12a} to confirm that the three ranking functions $\{f_1, f_2, f_3\}$ provide similar ranking accuracy,  we found that the three rankers tend to assign documents to different ranks. In fact, only $7.39\%$ of the documents were ranked at the same rank by all three rankers, such that about $93\%$ of the documents contributed to an interventional set and thus became meaningful training data for AllPairs. Table~\ref{tab:arxiv_matched_total} shows the size of the interventional sets $S_{k,k'}$ for the top 10 positions. While local swaps are most common, the rankers frequently rank the same document at substantially different ranks. 


\begin{table}[t]
    \caption{Size of the interventional sets $S_{k,k'}$ for Arxiv (showing only top 10 positions).}
    \label{tab:arxiv_matched_total}
    \footnotesize
    \centering
    \begin{tabular}{|c|c|c|c|c|c|c|c|c||c|}
        \hline
         \multicolumn{9}{|c||}{rank $k$} & rank\\
        1 & 2 & 3 & 4 & 5 & 6 & 7 & 8 & 9 & $k'$\\
        \hline
        \hline 13625 & 8340 & 6723 & 5231 & 3966 & 3051 & 2656 & 2274 & 2015 & 2\\
        \hline  -    & 9039 & 7692 & 5994 & 5053 & 3588 & 2675 & 2861 & 2244 & 3\\
        \hline  -    & -    & 8555 & 6994 & 5573 & 4117 & 3368 & 3321 & 2361 & 4\\
        \hline  -    & -    &  -   & 6783 & 5345 & 5040 & 3849 & 3427 & 3614 & 5\\
        \hline  -    & -    &  -   & -    & 6290 & 4809 & 4058 & 4126 & 3489 & 6\\ 
        \hline  -    & -    &  -   & -    & -    & 5466 & 4746 & 3935 & 3294 & 7\\
        \hline  -    & -    &  -   & -    & -    & -    & 5425 & 4092 & 3692 & 8\\
        \hline  -    & -    &  -   & -    & -    & -    & -    & 4258 & 3930 & 9\\
        \hline  -    & -    &  -   & -    & -    & -    & -    & -    & 3719 & 10\\
        \hline
    \end{tabular}
\end{table}


\begin{figure}
    \centering
    \includegraphics*[width=0.91\linewidth]{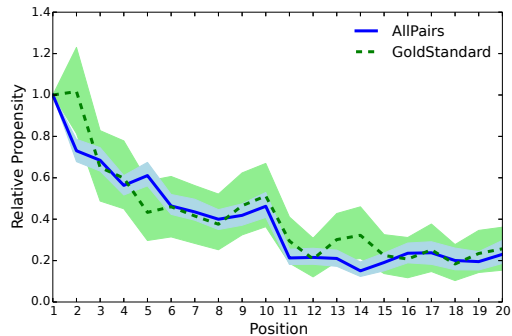}
    \vspace*{-0.3cm}
    \caption{Estimated propensity curves for Arxiv. Shaded regions correspond to a 95\% confidence interval.}
    \label{fig:arxiv}
\end{figure}

\subsection{Real-World Evaluation: Google Drive Search}

We conducted a second real-world experiment on the search for Google Drive. The service uses an overlay to show results as users type. The overlay disappears when a click on the overlay happens. Thus, each query has at most a single click. The overlay displays at most 5 results for each query and all the displayed results are logged with their position information. Again, we compare the results from the AllPairs estimator against a gold-standard intervention experiment.

\paragraph{Gold Standard.} We follow \cite{Wang/etal/18} and use $\mathit{Swap}(k,k+1)$ interventions along with the AdjacentChain estimator as our gold standard. 

\paragraph{Intervention Harvesting.} In an A/B test, users were randomly assigned either to the production ranker $f_1$ or a new ranker $f_2$. Based on these two rankers the interventional sets $S_{k,k'}$ with $k \not= k' \in \{1,..,5\}$ were computed and provided to the AllPairs estimator.

\paragraph{Results.} During two weeks in April 2018, a total of 877,689 queries were collected. Note that data for the gold-standard swap experiment was collected at a different time and for a different ranking function, but is is reasonable to assume that the propensity curve has not changed. The estimated propensity curves for the AllPairs estimator and the gold-standard swap experiment are shown in Figure~\ref{fig:drive}. The AllPairs estimates closely resemble the gold standard as desired. For comparison, we also computed propensity estimates via the regression EM from~\cite{Wang/etal/18} on the same data set as AllPairs. Those estimates are substantially off from the gold standard. Even further off is the naive method of using the empirical clickthrough rate (CTR) as an estimate for $p_k$ without any experimental control for relevance. Overall, we find that Intervention Harvesting with AllPairs provides superior results that are close to the gold standard.

In some ways, the Arxiv experiment and the Google Drive experiment covered two substantially different use cases. For Arxiv, the rankings were substantially different, while for Google Drive the two ranking functions were typically quite close. In fact, the two rankers for Google Drive provided identical top-5 rankings for $75.26\%$ of the queries, which thus did not contribute any useful data to the interventional sets. And even among the queries that lead to different rankings, $74.5\%$ of the results were at the same rank in both ranking functions. This means much less interventional-set data was generated per query, and Table~\ref{tab:drive_matched_total} further shows that most of the swaps were quite local. It is therefore reassuring that AllPairs nevertheless provides accurate estimates, and that it works well for both Arxiv and Google Drive despite these differences.



\begin{figure}
    \centering
    \includegraphics*[width=0.91\linewidth]{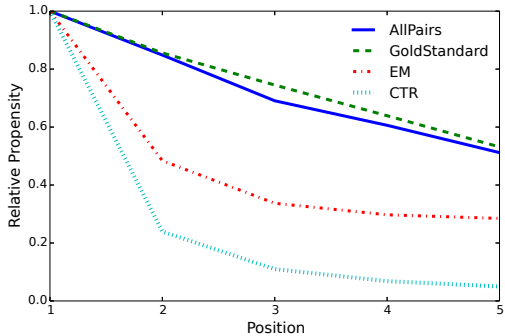}
    \vspace*{-0.3cm}
    \caption{Estimated propensity curves for Google Drive.}
    \label{fig:drive}
\end{figure}



\begin{table}[t]
    \caption{Size of the interventional sets $S_{k,k'}$ for Google Drive.}
    \label{tab:drive_matched_total}
    \centering
    \begin{tabular}{|c|c|c|c||c|}
        \hline
         \multicolumn{4}{|c||}{rank $k$} & rank\\
        1 & 2 & 3 & 4 & $k'$\\
        \hline
        \hline   19,516 & 2,203 &  579 &  244 & 2\\
        \hline   - & 48,576 & 10,630 & 3,780  & 3\\
        \hline   - & - &  73,170 & 21,037     & 4\\
        \hline   - & - & - & 92,343           & 5\\
        \hline
    \end{tabular}
\end{table}

\subsection{Robustness Analysis: Yahoo LTR Challenge}

While the two real-world experiments provide validation for the applicability of the method, these are just two data-points in a large spectrum of possible settings. We therefore now evaluate the robustness of the method on synthetic click data, where we know the true propensity curve by construction and can control the properties of the data with respect to all relevant parameters (e.g. noise, ranker similarity, data-set size).

\begin{figure}[t]
    \centering
    \includegraphics*[width=0.91\linewidth]{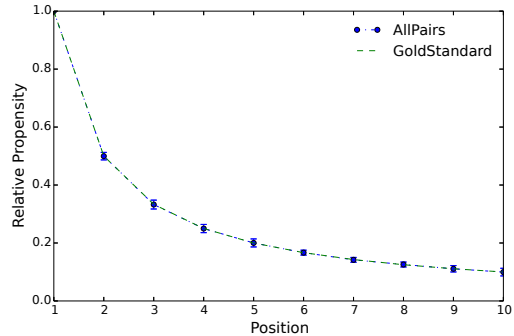}
    \vspace*{-0.3cm}
    \caption{Estimated propensity curve for synthetic click data derived from the Yahoo LTR dataset ($\eta=1$, $\epsilon_-=0.1$, $n_i=99720$, $\mathrm{frac}=0.02$, $\mathrm{overlap}=0.8$).}
    \label{fig:synthetic}
\end{figure}

\paragraph{Experiment Setup.} We generated synthetic click data according to the following methodology that closely matches that in \cite{Joachims/etal/17a}. In particular, we use the Yahoo LTR Challenge data, which comes with manual relevance judgments. Using these relevance judgments, clicks were generated by simulating the Position-Based Model with propensities that decay with the presented rank via $p_r = \big ( \frac{1}{r} \big )^\eta$. The parameter $\eta$ controls the severity of bias, with higher values causing greater position bias. We also introduced noise into the clicks by allowing some irrelevant documents to be clicked. Specifically, an irrelevant document ranked at position $r$ by the production ranker is clicked with probability $p_r$ times $\epsilon_-$ whereas a relevant document is clicked with probability $p_r$. For simplicity (and without loss of generality), we used click logs from two rankers in each experiment setting. Rankers were obtained by training Ranking SVMs on random samples of queries with their manual relevance judgments. The ``similarity'' of the two rankers was controlled by varying the degree of overlap in their respective training sets. We evaluate the estimation accuracy via the Mean Squared Error (MSE) of the estimated inverse relative propensity weights, $MSE= \frac{\sum_{i=1}^{M} (\hat{p}_1/\hat{p}_i - p_1/p_i)^2}{M}$, since these inverse propensity weights better reflect how inaccurate propensity estimates impact the IPS estimator \cite{Joachims/etal/17a}. We estimate propensities up to rank $M=10$. Error bars indicate the standard deviation over $6$ independent runs (except in Figure~\ref{fig:synthetic} as described below). If not mentioned otherwise, the number of simulated queries per ranker is $n_i=99720$ (obtained by $5$ sweeps of the $19944$ queries in the Yahoo LTR training set), and we use $\eta=1$ and $\epsilon_-=0.1$. 

\begin{figure*}[t]
    \centering
    \subfloat[Estimation error with increasing amount of log data for each ranker where every $1$ sweep has 19944 queries. ($\eta=1$, $\epsilon_-=0.1$, $\mathrm{frac}=0.02$, $\mathrm{overlap}=0.8$)]{\includegraphics[width=55mm]{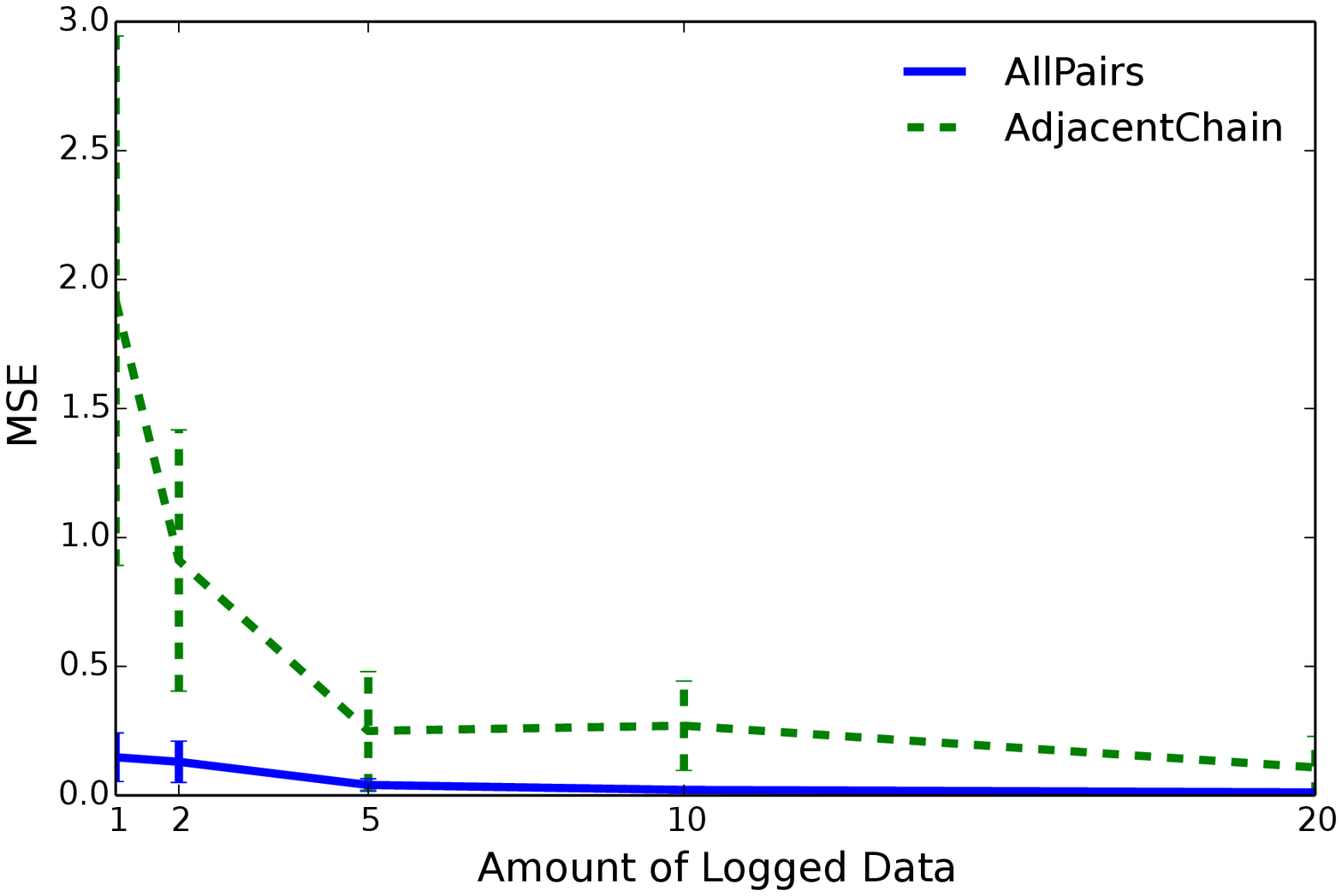}\label{fig:plot_n}}
    \hspace{0.01\textwidth}
    \subfloat[Estimation error with increasing fraction of overlap in the training data for the rankers. ($\eta=1$, $\epsilon_-=0.1$, $n_i=99720$, $\mathrm{frac}=0.02$)]{\includegraphics[width=55mm]{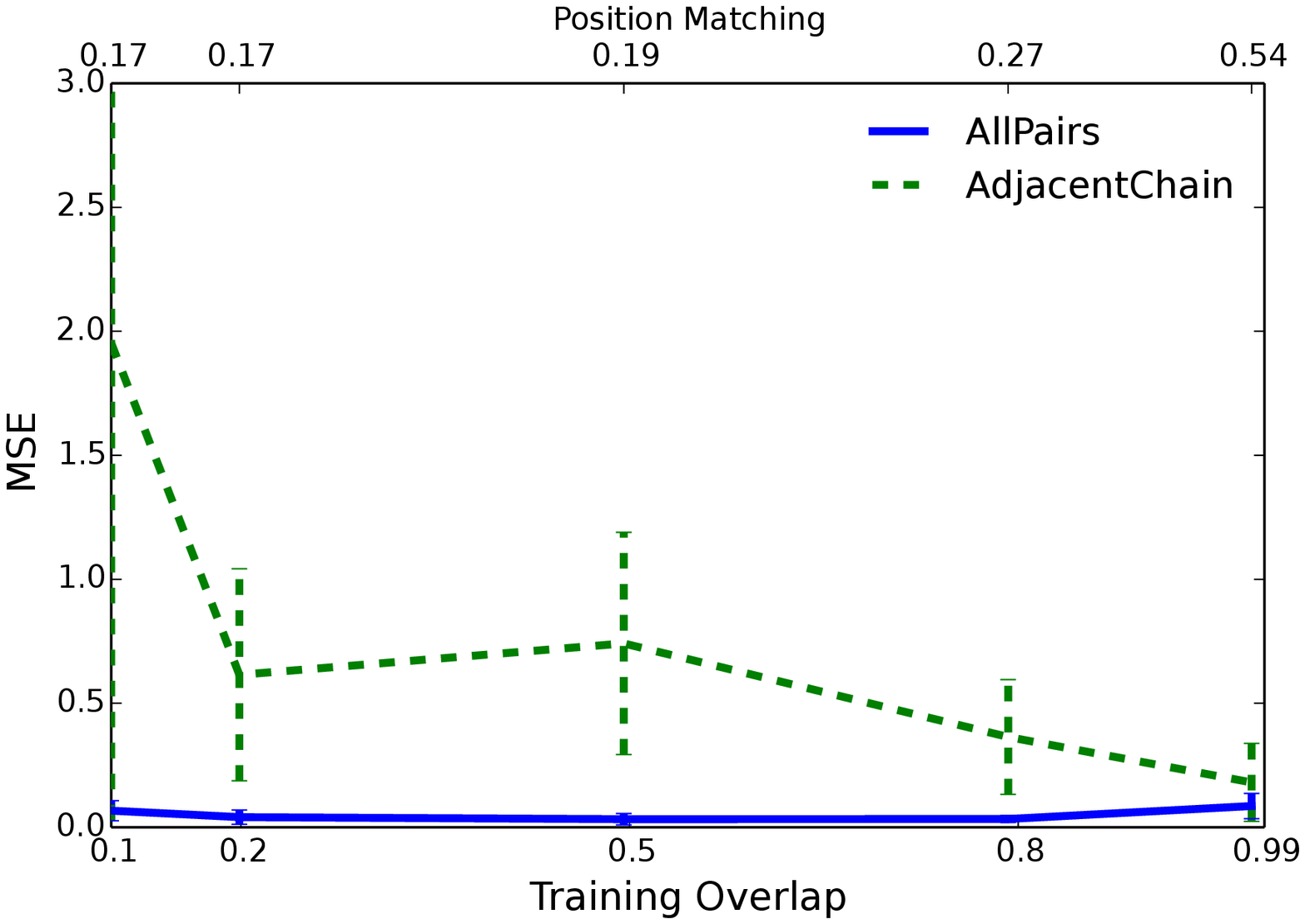}\label{fig:plot_ov}}
    \hspace{0.01\textwidth}
    \subfloat[Estimation error with increasing fraction of the training data used for both rankers. ($\eta=1$, $\epsilon_-=0.1$, $n_i=99720$, $\mathrm{overlap}=0.8$)]{\includegraphics[width=55mm]{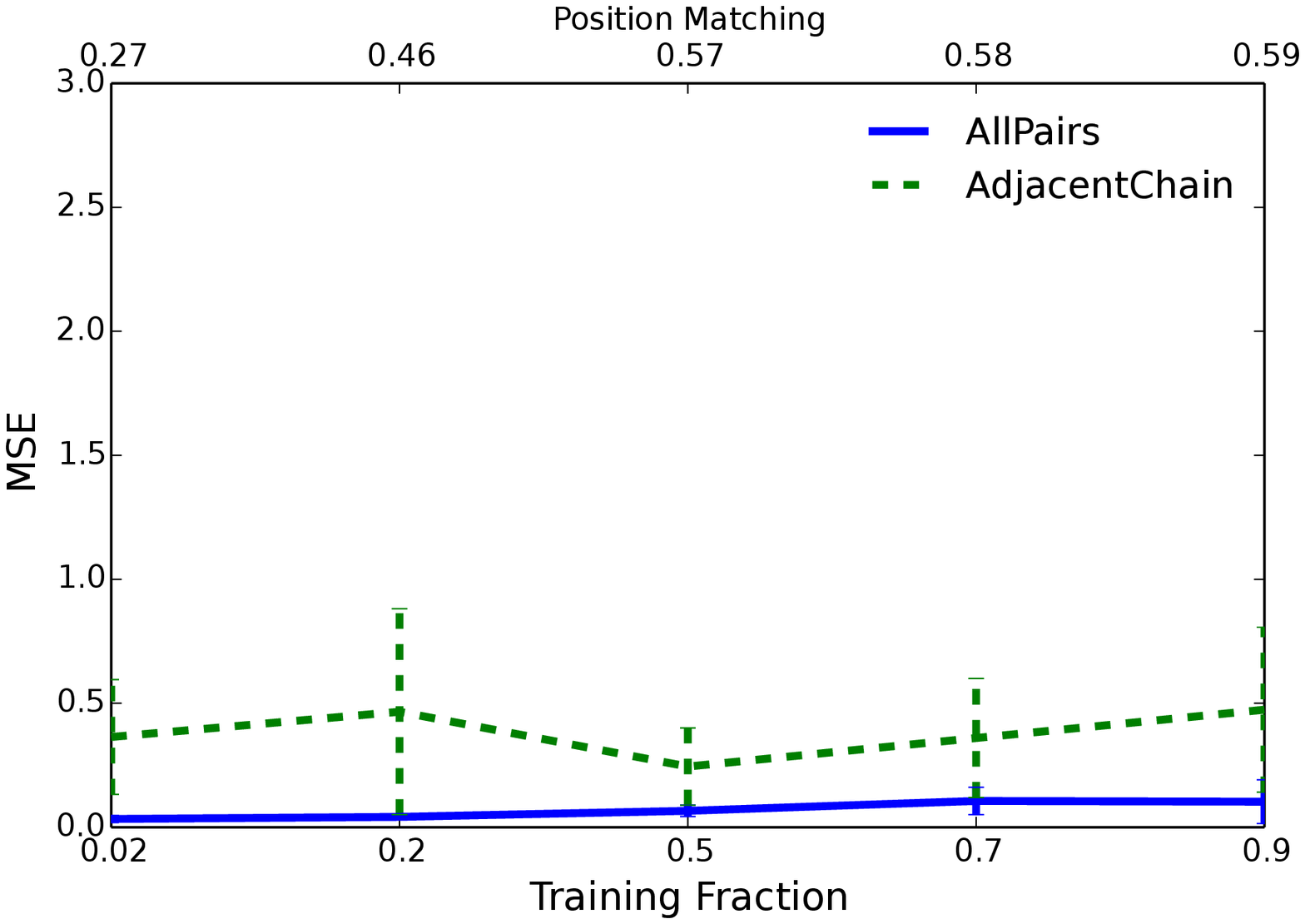}\label{fig:frac}}
    \hspace{0.01\textwidth}
    \subfloat[Estimation error with increasing amount of noise $\epsilon_-$ in the training data. ($\eta=1$, $n_i=99720$, $\mathrm{frac}=0.5$,  $\mathrm{overlap}=0.5$)]{\includegraphics[width=55mm]{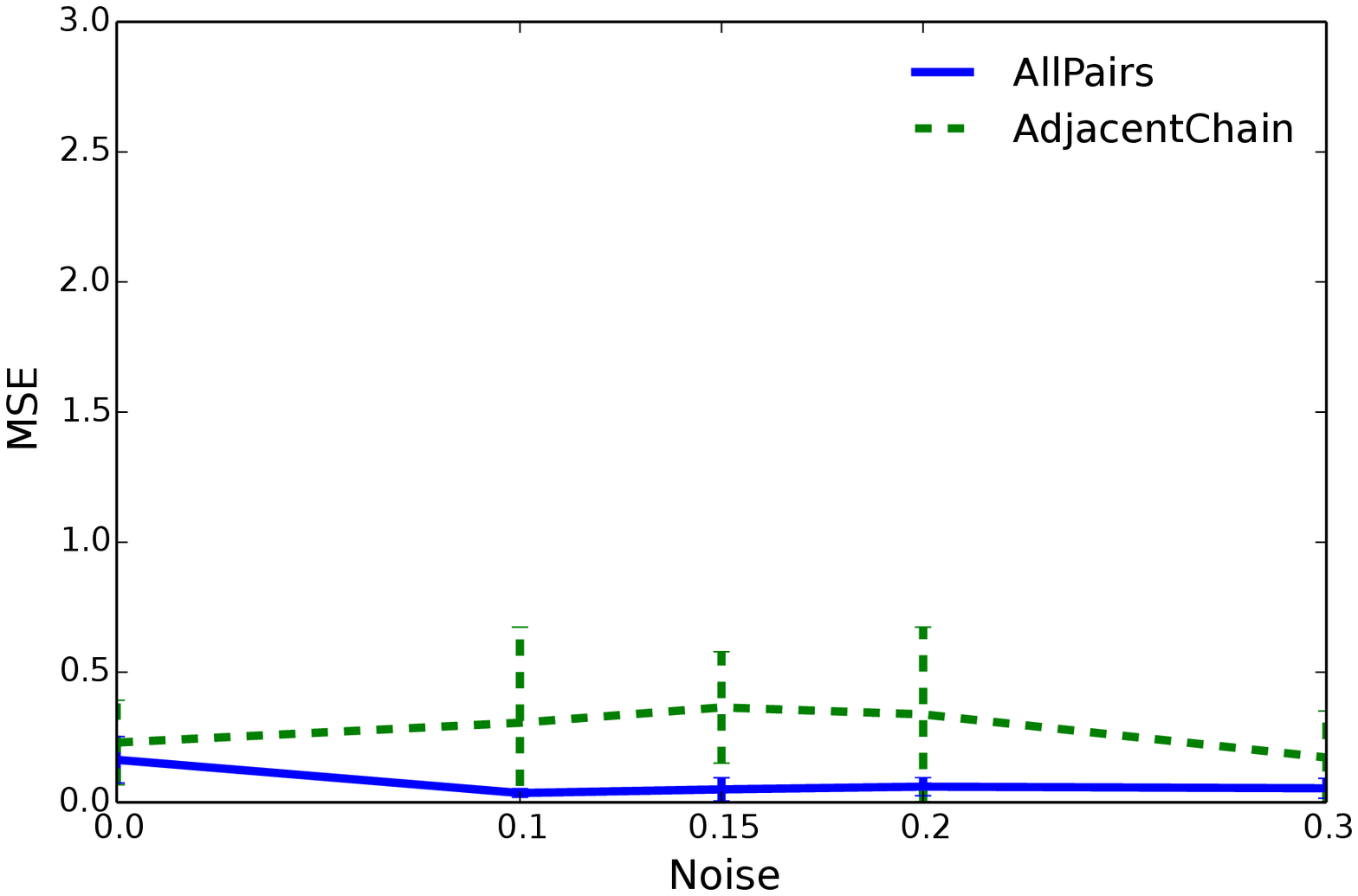}\label{fig:noise}}
    \hspace{0.01\textwidth}
    \subfloat[Estimation error with increasing severity of bias $\eta$  in the training data. ($\epsilon_-=0.1$, $n_i=99720$, $\mathrm{frac}=0.5$,  $\mathrm{overlap}=0.5$)]{\includegraphics[width=55mm]{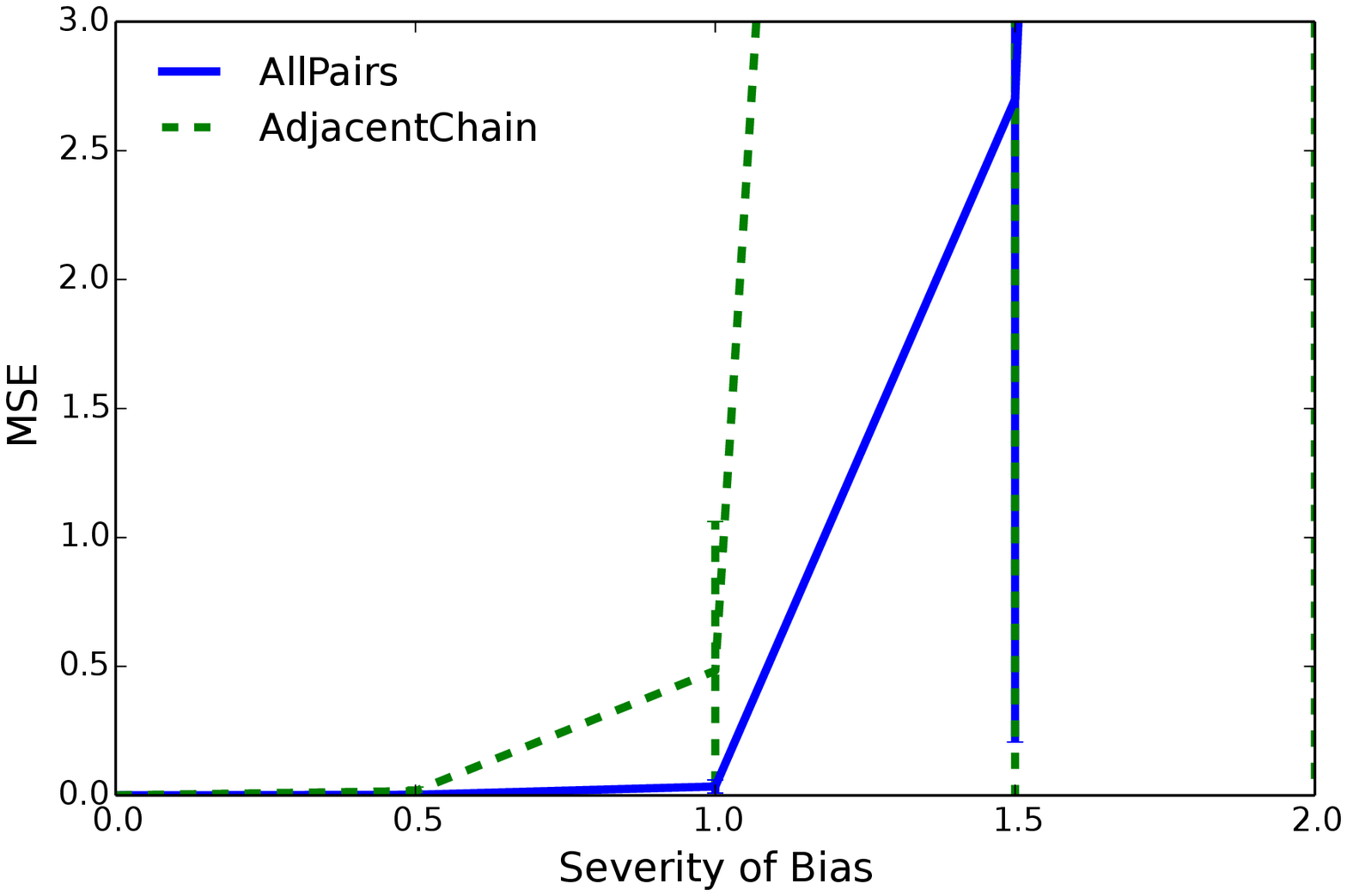}\label{fig:prop}}
    \hspace{0.01\textwidth}
    \subfloat[Estimation error with increasingly unbalanced amounts of log data from $f_1$ vs. $f_2$ where every $1$ sweep has 19944 queries. ($\eta=1$, $\epsilon_-=0.1$, $n_1+n_2=119664$, $\mathrm{frac}=0.02$, $\mathrm{overlap}=0.5$)]{\includegraphics[width=55mm]{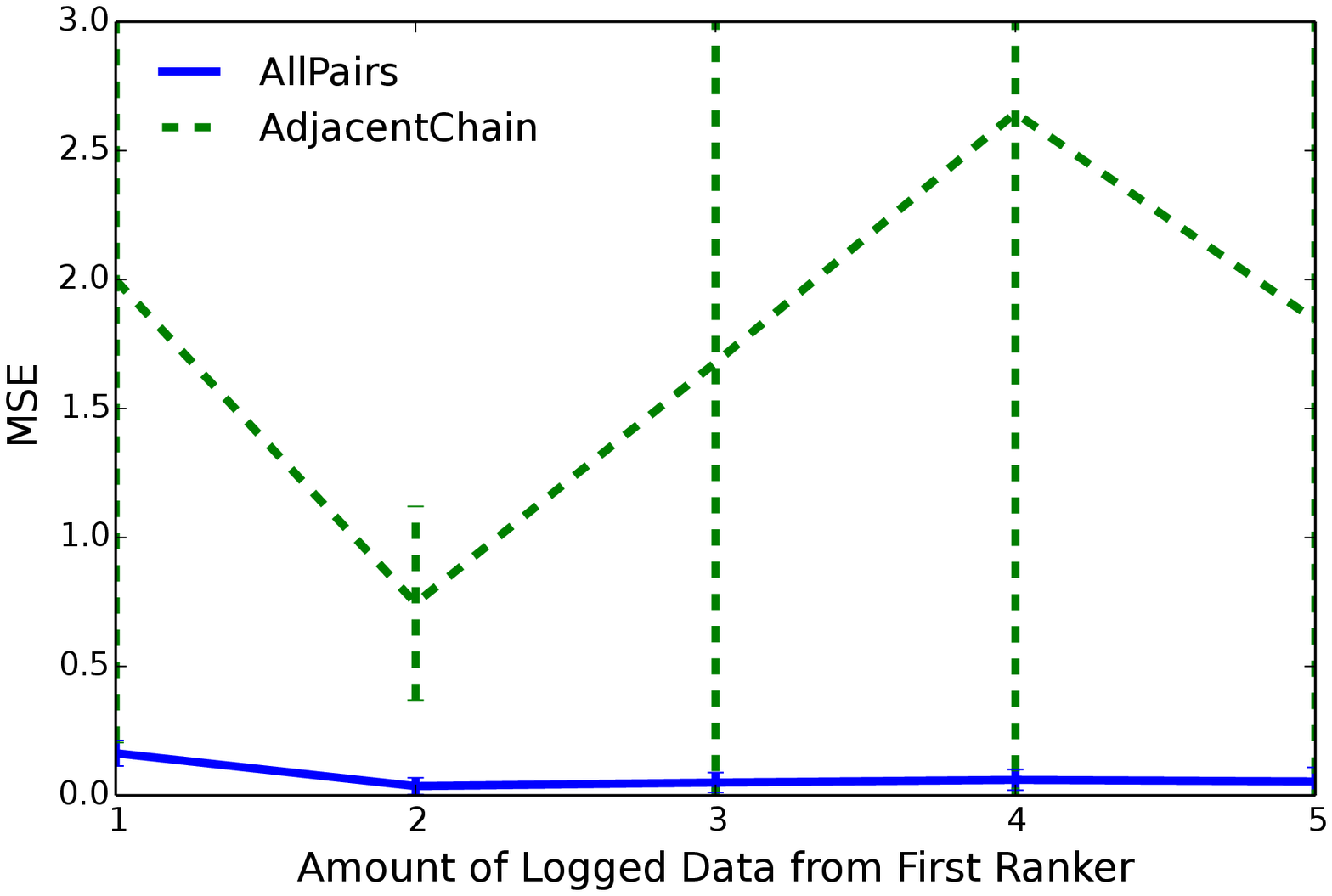}\label{fig:bal}}
    \caption{Robustness experiments on the synthetic click data derived from the Yahoo LTR dataset.}
    \label{fig:synrobust}
\end{figure*}

\subsubsection{Does AllPairs recover the true propensity curve?}

Figure~\ref{fig:synthetic} shows the estimated propensities of AllPairs in comparison to the true propensity curve of ($\eta=1$) that is known by construction. All parameters of the simulation experiment are kept at the defaults as stated above. As expected, AllPairs perfectly recovers the true propensities. The error bars show the $99\%$ confidence intervals that are computed via the standard deviation estimated from re-running the simulation experiment $20$ times.

\subsubsection{How much data is needed?}

While the previous section showed that Intervention Harvesting with the AllPairs estimator converges to the true propensities, we now analyze the speed of convergence. Figure~\ref{fig:synrobust} (a) shows that AllPairs provides good estimates even for modest amounts of data. As a baseline for comparison, we also show the MSE of the AdjacentChain estimator, which is substantially worse and requires at least one order of magnitude more data to achieve the same MSE as AllPairs. We also explored the use of the PivotOne estimator, but do not report its results since they are typically worse than those of AdjacentChain.


\subsubsection{How different should the ranking functions be?}

If the ranking functions $f_i$ are all identical, then Intervention Harvesting cannot produce any data. So, how different do the rankers $f_i$ need to be? To vary ranker similarity, we trained pairs of rankers with increasing overlap (from $1\%$ to $99\%$) in their training sets. In Figure~\ref{fig:synrobust} (b), we see that the estimation accuracy remains quite robust even as the rankers become increasingly similar due to the overlap in the data they are trained on. The top of the plot shows the similarity in terms of the fraction of documents at the same rank in both rankers averaged across all the queries. As expected, the error goes up when the rankers are very similar since then they tend to put documents at the same position, leading to fewer interventional pairs. Interestingly, the error is also relatively higher when the rankers are too dissimilar. This is because when the candidate sets are larger than $10$, the dissimilarity in the rankers causes many interventions to be discarded since they often go beyond rank $10$. Note that AdjacentChain benefits from ranker similarity, since it focuses the interventional set data on $S_{k,k+1}$. However, AdjacentChain at best matches the performance of AllPairs, but never outperforms it.


\subsubsection{How important is the quality of the rankers?}

Another way of controlling ranker similarity is to increase the total number of training examples for both. Figure~\ref{fig:synrobust} (c) shows the result of this experiment. Again, AllPairs shows robust performance over the whole spectrum of settings and substantially improves over AdjacentChain.


\subsubsection{How does click noise impact estimation accuracy?}


Clicks in any real-world setting will be noisy, and we thus want to explore the robustness of AllPairs with respect to noise. Note that noise (which can be seen simply as an alternative vector of relevances $rel(q,d)$ in our model) should have no influence on the propensity estimates via Information Harvesting, as long as the noise is not confounded by rank (which is guaranteed in the PBM). Figure~\ref{fig:synrobust} (d) verifies that Information Harvesting estimates are indeed stable over different levels of $\epsilon_-$ noise.

\subsubsection{How does bias severity impact estimation accuracy?}

Figure~\ref{fig:synrobust} (e) explores the behavior of the estimators when we vary the steepness of the propensity curve via $\eta$. The AllPairs method performs better than AdjacentChain, but the MSE of both methods increases sharply when the propensity curve gets steep. The explanation for this is twofold. First, the increase is partly an artifact of the MSE error measure on the inverse propensities. Steep curves lead to small propensities, which in turns generates large inverse propensities that provide opportunity for large MSE. Second, for steep propensity curves the bottom ranks receive only few clicks such that there is not enough data for estimating their propensities reliably. 


\subsubsection{How robust is the method to imbalanced datasets?}

Finally, we explore the situation where we may have more log data from one ranker than from the other. Figure~\ref{fig:synrobust} (f) shows MSE when we vary the amount of log data for $f_1$ while keeping the total amount of data constant at $119,664$ queries (i.e. $6$ sweeps). The plot shows that AllPairs is robust to such imbalances. 


\section{Conclusion}

We presented the idea of Intervention Harvesting, allowing the use of multiple historic loggers for generating interventional data under mild assumptions. We showed how this idea can be used to control for relevance, providing the first method for propensity estimation in the PBM that does not require intrusive interventions, a relevance model, or repeat queries. In particular, we propose the AllPairs estimator for combining all intervention data, which we find to provide superior propensity estimation accuracy compared to existing local estimators over a wide spectrum of settings. Beyond these contributions, the paper opens an interesting space of research questions for how other intervention data can be harvested and what other estimation problems can be addressed in this way.

\section{Acknowledgments}

This research was supported in part by NSF Awards IIS-1615706 and IIS-1513692, an Amazon Research Award, and a gift from Workday. All content represents the opinion of the authors, which is not necessarily shared or endorsed by their respective employers and/or sponsors.

\bibliography{ref}
\bibliographystyle{ACM-Reference-Format}

\end{document}